\DeclareMathOperator{\trace}{Tr}
\renewcommand{\vec}[1]{\mathbf{#1}}
\newcolumntype{M}{>{\centering\arraybackslash}m{\dimexpr.25\linewidth-2\tabcolsep}}
\newtheorem{theorem}{Theorem}
\newtheorem{proposition}{Proposition}
\newtheorem{lemma}{Lemma}
\title{\LARGE \bf
Adaptive Communication Networks with Privacy Guarantees
}
\author{Atiye Alaeddini$^{1}$, Kristi Morgansen$^{2}$ and Mehran Mesbahi$^{3}$% <-this % stops a space
\thanks{$^{1}$Atiye Alaeddini is with Institute for Disease Modeling, 3150 139th Ave SE, Bellevue, WA, 98005
        {\tt\small aalaeddini@idmod.org}}%
\thanks{$^{2}$Kristi Morgansen is with William E. Boeing Department of Aeronautics and Astronautics, University of Washington, Seattle, WA, 98195-2400
        {\tt\small morgansen@aa.washington.edu}}%
\thanks{$^{3}$Mehran Mesbahi is with William E. Boeing Department of Aeronautics and Astronautics, University of Washington, Seattle, WA, 98195-2400
        {\tt\small mesbahi@aa.washington.edu}. The research of M. Mesbahi has been supported by ARO grant W911NF-13-1-0340 and ONR grant N00014-12-1-1002.}%
}
\begin{document}

\maketitle
\thispagestyle{empty}
\pagestyle{empty}

%%%%%%%%%%%%%%%%%%%%%%%%%%%%%%%%%%%%%%%%%%%%%%%%%%%%%%%%%%%%%%%%%%%%%%%%%%%%%%%%
\begin{abstract}

Utilizing the concept of observability, in conjunction with tools from graph theory and optimization, this paper develops an algorithm for network synthesis with privacy guarantees. In particular, we propose an algorithm for the selection of optimal weights for the communication graph in order to maximize the privacy of nodes in the network, from a control theoretic perspective. In this direction, we propose an observability-based design of the communication topology that improves the privacy of the network in presence of an intruder. The resulting adaptive network responds to the intrusion by changing the topology of the network-in an online manner- in order to reduce the information exposed to the intruder. 

\end{abstract}

%%%%%%%%%%%%%%%%%%%%%%%%%%%%%%%%%%%%%%%%%%%%%%%%%%%%%%%%%%%%%%%%%%%%%%%%%%%%%%%%
\section{INTRODUCTION}

Networked dynamic systems consist of multiple dynamic units that are interconnected via a network. In recent years, the area of networked systems has received extensive attention from the research community. There are many examples of networked systems in our everyday lives such as social and transportation networks. Analysis of certain classes of complex networks such as biological networks, power grids, and robotic networks is of increasing interest in a number of scientific and engineering communities.

A network abstracts the communication topology for exchanging information between agents or nodes
in order to coordinate reaching a network-level goal. When different agents exchange sensitive data, one of the main concerns is ensuring privacy. 
In a dynamic setting, network observability captures the information content available to an individual agent in the network. In this work, we propose an adaptation mechanism for the network topology that aims to minimize observability, that in turn can be used to minimize the information exposed to an intruder. 

Observability-based design of a multi-agent network can be found in recent works~\cite{pequito2014optimal, kibangou2014observability, kia2015dynamic, alaeddini2016optimal}. The notion of observability has also been used in multi-robot localization~\cite{mariottini2005vision, zhou2008robot, huang2011observability, sharma2012graph}, social networks~\cite{golovin2011adaptive}, electric power grid management~\cite{wu1988real}, and biological systems~\cite{kang2009computational, chis2011structural, whalen2012observability}. Protocols with privacy guarantees have been previously addressed in~\cite{kefayati2007secure, huang2012differentially, manitara2013privacy, giraldo2014delay}, where the injection of random offsets into the states of the agents have been considered. Among other works that have examined network privacy, the authors of~\cite{pequito2014design} have considered the connection between privacy in the network and its observability. Privacy guarantees in~\cite{pequito2014design} ensure that each agent is unable to retrieve the initial states of non-neighboring agents in the network. Since all agents are potentially malicious, the optimal solution presented in ~\cite{pequito2014design} is inherently conservative, resulting in removing some edges to generate a topology that as many nodes as possible are part of the unobservable subspace of the graph.

%In networked systems, a decentralized control structure is more desirable than a centralized one. This is due the fact that typically, it is not realistic to assume that each control input signal is generated by using all the states information of the entire network. Particularly, in some applications, agents prefer not to share their data due to privacy considerations. The authors of~\cite{yan2013distributed} argue that the distributed optimization methods improve privacy-preserving properties. Decentralized control theory has historically attracted the attention of system theorists, and several methodologies have been proposed over the years~\cite{sandell1978survey}.

The main contribution of this work is presenting the problem of privacy maximization in a network as a regret minimization problem. The present work considers the design of the weights on the edges of the network in an online manner in order to achieve maximum privacy. This is done using the regret minimization framework- in particular, we do not require any noise injection or edge removal as other works in network privacy have relied upon in the past. In this work, we assume that the structure of the network has been given-and that in fact-it is  connected. We then proceed to design the weights in the network in order to minimize the amount of information about a node to other agents in the network from the observability perspective. In order to guarantee privacy, our goal is proposing a network adaptation mechanism such that if a node is compromised by an intruder, the amount information leaked by this node to the rest of the network is minimized. 

%%%%%%%%%%%%%%%%%%%%%%%%%%%%%%%%%%%%%%%%%%%%%%%%%%%%%%%%%%%%%%%%%%%%%%%%%%%%%%%
\section{BACKGROUND}
\label{sec:prelim}

%\subsection{Representation of A Network as A Graph}
The interactions amongst agents in a network can be abstracted as an undirected graph $\mathcal{G}=(\mathcal{V},\mathcal{E},\vec{w})$. Each agent in the network is denoted as a node, and edges represent communication links between agents. The number of nodes will be denoted by $N$ and the number of edges as $M$. The node set, $\mathcal{V}$, consists of all nodes in the network. On the other hand, the edge set, $\mathcal{E}$, is comprised of pairs of nodes $\{i,j\}$, if nodes $i$ and $j$ are adjacent. Each edge is assumed to have a weight $w_i \in \mathbb{R}_{>0}$. The neighborhood set $\mathcal{N}(i)$ of node $i$ is composed of all agents in $\mathcal{V}$ that are adjacent to $v_i$. 
The weighting vector $\vec{w} \in \mathbb{R}_{>0}^M$ with dimension $M$, represents the weight on the edges. These edges are encoded through the index mapping $\sigma$ such that $l=\sigma(i,j)$, if and only if edge $l$ connects nodes $i$ and $j$. The edge weights will be denoted as $w_{ij}$ and $w_l$, interchangeably. The adjacency matrix is an $N \times N$ symmetric matrix with $\mathcal{A}[i,j] = 1$ when $\{i,j\} \in \mathcal{E}$, and $\mathcal{A}[i,j] = 0$, otherwise. The weighted adjacency matrix is an $N \times N$ symmetric matrix with $\mathcal{A}[i,j] = w_l$ when $\{i,j\} \in \mathcal{E}$ and edge $l$ connects them, and $\mathcal{A}[i,j] = 0,$ otherwise. The weighted degree $\delta_i$ of node $i$ is the sum of the weights of the edges connecting node $i$ to its neighbors. The weighted degree matrix, $\Delta_w$, is a diagonal matrix with $\delta_i$ as its $i$th diagonal entry. The incidence matrix $E(\mathcal{G})$ is an $N \times M$ matrix. Column $\sigma(i,j)$ of the incidence matrix is $\vec{e}_{i}-\vec{e}_{j}$, which is represented by $\vec{e}_{ij}$ in this paper. The graph Laplacian is defined as $L=\Delta_w-\mathcal{A}_w$, and is a positive semi-definite matrix, which satisfies $\vec{1}^TL =\vec{0}$ and $L \vec{1}=\vec{0}$. Here, $\vec{1}$ denotes a vector with all elements equal to one. 
%An undirected graph is said to be strongly connected if there exists an elementary path between any pair of vertices. To ensure connectivity of a graph the second smallest eigenvalue of the Laplacian of the graph should be positive ($\lambda_2(L) > 0$).

If the matrix $A$ is positive semidefinite (respectively, positive definite), denoted as $A \succeq 0$ (respectively, $A \succ 0$), then all eigenvalues of $A$ are nonnegative (respectively, positive). We consider the following operation over matrices: the Hadamard product is a binary operation that takes two matrices of the same dimension, and produces another matrix where each element $[i,j]$ is the product of elements $[i,j]$ of the original two matrices:
$$(A \circ B)[i,j] = A[i,j] \cdot B[i,j]\,.$$
The Hadamard product is associative and distributive, and unlike the matrix product it is also commutative. It is known that the Hadamard product of two positive semi-definite matrices is positive semi-definite.

\subsection{Consensus Algorithm}
%Coordination of multi-agent systems has received a lot of attention from various scientific communities due to its broad applications in different areas, such as satellite formation flying, cooperative unmanned air vehicles, scheduling of automated highway systems, and air traffic management. 
The consensus algorithm is a control policy, based on local information, that enables the agents to reach agreement on certain states of interest. 
%Another research area that is closely related to the consensus of multi-agent systems is the synchronization of coupled nonlinear oscillators. 
%
Let us provide a brief overview of the consensus protocol.
Consider $x_i(t) \in \mathbb{R}$ to be the $i$th node's state at time $t$. The continuous-time consensus protocol is then defined as 
$$\displaystyle \dot{x}_i(t) = \sum_{\{i,j\} \in \mathcal{E}} w_{ij} \left(x_j(t)-x_i(t)\right)\,.$$
In a compact form with $\vec{x}(t) \in \mathbb{R}^N$, the corresponding collective dynamics is represented as $\dot{\vec{x}} = -L(\vec{w}) \vec{x}$. Thus, each agent only requires its relative state with respect to its neighbors for the consensus dynamics. The dynamics of a connected network performing the consensus algorithm converges to agreement on the state \cite{olfati2007consensus}.

\subsection{Observability of a Dynamical System}
Observability, the feasibility of reconstructing system states from the time history of its outputs, is essential in design of control systems. Observability Gramian is the most common construct for evaluating system observability. For a linear system 
\begin{equation}
%\begin{aligned}
   \dot{\vec{x}} = A \vec{x} + B \vec{u},  \ \ \vec{y} = C \vec{x} \,,  \label{linear}
%\end{aligned} 
\end{equation}
the observability Gramian \cite{Krener09},
\begin{equation}
	W_O = \int_{0}^{t_f} e^{A^Tt} C^T C e^{At} \mathrm{d}t \,, \label{LinGram}
\end{equation}
can be computed to evaluate the observability of a system. A system is fully observable if and only if the corresponding observability Gramian is full rank \cite{Muller72}.

\begin{proposition}
Minimizing the trace of the observability Gramian is equivalent to minimizing the inverse of the estimation error covariance matrix for a linear system.
\end{proposition}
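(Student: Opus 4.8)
The plan is to turn the (informally stated) equivalence into the identity $W_O = P^{-1}$, where $P$ is the error covariance of a natural estimator built from the output history, and then observe that minimizing $\trace(W_O)$ and minimizing $\trace(P^{-1})$ are literally the same optimization problem over the design variable $\vec{w}$ (which enters through $A = -L(\vec{w})$, with $C$ fixed by which node is compromised).

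First I would fix the estimation model. Take the system \eqref{linear} with $\vec{u}\equiv 0$ and a noisy output $\vec{y}(t) = C\vec{x}(t) + \vec{v}(t)$, where $\vec{v}$ is zero-mean white noise of unit spectral density (a nontrivial noise covariance $R$ only rescales $C$ by $R^{-1/2}$ and changes nothing conceptually). Because $\vec{x}(t) = e^{At}\vec{x}(0)$, reconstructing the trajectory is equivalent to estimating the initial state $\vec{x}(0)$ from $\{\vec{y}(t): t\in[0,t_f]\}$, which is exactly the content measured by observability.

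Next I would compute the error covariance of the weighted least-squares (equivalently, maximum-likelihood under Gaussian $\vec{v}$) estimator $\hat{\vec{x}}(0) = W_O^{-1}\int_0^{t_f} e^{A^T t} C^T \vec{y}(t)\,\mathrm{d}t$. Substituting the output model gives $\hat{\vec{x}}(0) - \vec{x}(0) = W_O^{-1}\int_0^{t_f} e^{A^T t} C^T \vec{v}(t)\,\mathrm{d}t$; forming the covariance and using $\mathbb{E}[\vec{v}(t)\vec{v}(s)^T] = I\,\delta(t-s)$ collapses the double integral to $W_O$, so $P := \mathrm{Cov}(\hat{\vec{x}}(0) - \vec{x}(0)) = W_O^{-1} W_O W_O^{-1} = W_O^{-1}$. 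The same matrix $P^{-1}=W_O$ is the Fisher information for $\vec{x}(0)$, i.e.\ the Cram\'er--Rao bound on any unbiased estimator, so the conclusion does not hinge on the particular estimator chosen; note that full rank of $W_O$ (observability) is what makes $P = W_O^{-1}$ well defined in the first place.

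Finally, since $P^{-1} = W_O$ as matrices, $\trace(P^{-1}) = \trace(W_O)$ identically, and therefore minimizing the trace of the observability Gramian over the admissible edge weights coincides---same objective value, same minimizers---with minimizing the trace of the inverse estimation-error covariance, which is the claim. I expect the only real difficulty to be the bookkeeping of stating the correspondence precisely enough (which estimator, which noise model, estimating $\vec{x}(0)$ rather than $\vec{x}(t_f)$) that the identity $P^{-1} = W_O$ holds exactly rather than up to a monotone transformation; once that is pinned down, everything else is a one-line computation.
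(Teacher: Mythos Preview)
Your argument is correct: fixing the least-squares / maximum-likelihood estimator of $\vec{x}(0)$ from noisy outputs and computing its error covariance yields $P = W_O^{-1}$, hence $\trace(P^{-1}) = \trace(W_O)$ identically, which is precisely the claimed equivalence. The computation of $\hat{\vec{x}}(0)-\vec{x}(0)$ and the collapse of the double integral via $\mathbb{E}[\vec{v}(t)\vec{v}(s)^T]=I\,\delta(t-s)$ are both sound, and your remark that $W_O$ is the Fisher information (so the conclusion is estimator-independent via Cram\'er--Rao) is a nice strengthening.

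As for comparison: the paper does not actually prove this proposition in-line; it simply cites Theorem~1 of an external reference. Your write-up therefore supplies what the paper omits, and the route you take---identify the estimation problem as linear regression for the initial state, read off the information matrix as the Gramian---is the standard one and almost certainly what the cited theorem contains. There is nothing to correct; if anything, you might state explicitly at the outset that ``inverse of the estimation error covariance'' is being interpreted as $\trace(P^{-1})$, since the proposition as phrased leaves the scalarization implicit.
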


\begin{proof}
See Theorem 1 of \cite{alaeddini2016observability}.
\end{proof}

An alternative method to evaluate observability of a nonlinear system is using a relatively new concept of the observability covariance, namely the \emph{empirical observability Gramian} \cite{hahn2002improved, Krener09}. This tool provides a more accurate description of a nonlinear system's observability, while it is less computationally expensive than some other approaches, e.g.\,, Lie algebraic based approaches. For a given small perturbation $\epsilon > 0$ of the state, let $\vec{x}_0^{\pm i} = \vec{x}_0 \pm \epsilon \vec{e}_i$ be the initial condition and $\vec{y}^{\pm i}(t)$ be the corresponding output, with $\vec{e}_i$ is the $i^{\text{th}}$ unit vector in $\mathbb{R}^n$. For the system
\begin{empheq}[left=\Sigma_1 :\empheqlbrace ]{align}
 &    \dot{\vec{x}}=\vec{f}(\vec{x},\vec{u}), & \vec{x} & \in \mathbb{R}^n, \ \ \vec{u} \in \mathbb{R}^p \nonumber \\
 &	 \vec{y}=\vec{h}(\vec{x}),  & \vec{y} & \in \mathbb{R}^m\,, \label{generalNL}
\end{empheq}
the empirical observability Gramian, $W_O$, is an $n \times n$ matrix, whose $(i,j)$ component, $W_{O_{ij}}$, is given by 
\begin{equation}
	\frac{1}{4\epsilon^2} \int_{0}^{\infty} \left(\vec{y}^{+i}(t)-\vec{y}^{-i}(t) \right)^T \left(\vec{y}^{+j}(t)-\vec{y}^{-j}(t) \right) \mathrm{d}t. \label{EmpObsGram}
\end{equation}
It can be shown that if the system is smooth, then the empirical observability Gramian converges to the local observability Gramian as $\epsilon \to 0$. 
%Note that the perturbation $\vec{\epsilon}$, should be chosen such that system stays in the region of attraction of the equilibrium point of the system. 
The largest eigenvalue \cite{Singh05}, smallest eigenvalue \cite{Krener09}, the determinant \cite{DeVries13, Serpas13}, and the trace of the inverse \cite{DeVries13} of the observability Gramian have all been used as measures for observability.

\subsection{Online Convex Optimization}

Online learning has recently become popular for tackling very large scale estimation problems. The convergence properties of these algorithms are well understood and have been analyzed in a number of different frameworks, including game theory \cite{hazan2007logarithmic}. In the game theoretic framework, for analyzing online learning, a game called \emph{Online Convex Optimization} problem, which is a special case of a general online optimization problem, is defined. In this game, a player chooses a point from a convex set. After choosing the point, an adversary reveals a convex loss function, and the online player receives a loss corresponding to the point she had chosen. This scenario is then repeated. To measure the performance of the online player in this game, a metric called \emph{regret} is used. Regret is the difference between the loss of the online player and the best fixed point in hindsight. The algorithm introduced in \cite{hazan2007logarithmic} is similar to the well known Newton-Raphson method for offline optimization problem. This algorithm attains regret which is proportional to the logarithm of the number of iterations when the loss (cost) functions are convex.

Assume that an online player iteratively chooses a point from a non-empty, bounded, closed and convex set in the Euclidean space denoted by $\mathcal{P}$ for $T$ times. At iteration $t$, the algorithm takes the history of cost functions, $\{f_1,\cdots, f_{t-1}\}$ as input, and suggests $x_t \in \mathcal{P}$ to the online player to choose. After committing to this $x_t$, a convex cost function $f_t : \mathcal{P} \rightarrow \mathbb{R}$ is revealed. The loss associated with the choice of the online player is the value of the cost function at the point she had committed to, i.e.\,, $f_t(x_t)$. The objective is to minimize the accumulative penalty. The regret of the online player at the end of the game, i.e.\,, at time $T$, is defined as the difference between the total cost and the cost of the best single decision, where the ``best'' is chosen with the benefit of hindsight. Formally, regret is defined as
\begin{equation}
\mathcal{R}_T = \sum_{t=1}^T f_t(x_t) - \underset{x \in \mathcal{P}}{\text{min}} \sum_{t=1}^T f_t(x)\,.
\end{equation}
%
%Note that regret measures the difference in performance between the online player and a player who is constrained to choose a fixed point over all iterations, but with the benefit of hindsight. Although, it is tempting to compare the online player to an unconstrained player who has the benefit of hindsight (i.e.\, can dynamically change his point every iteration), this comparison becomes trivial and not interesting in many applications. 
The objective of the online algorithm is to achieve a guaranteed low regret. Specifically, it is desired that the online algorithm guarantees a sub-linear $\mathcal{R}_T$ or $\displaystyle \frac{\mathcal{R}_T}{T} \rightarrow 0$. A sub-linear regret guarantees that on average the algorithm performs as well as the best fixed action in hindsight.

\section{System Modeling}

Consider a multi-agent network consisting of $N$ agents and let $\mathcal{G}$ denote the inter-agent communication graph, where each agent $i$ has the ability to transmit scalar data to its neighbors, denoted by $\mathcal{N}(i) = \{j : (i, j) \in \mathcal{E}\}$. The state dynamics of each agent in the network is assumed to be stable and linear, and the neighboring agents $i$ and $j$ are connected by a linear diffusion with weight $w_{ij}$.

The network assumed in this paper is a network of multiple identical agents. The scalar state of each agent follows a stable dynamics, $\dot{x}_i = - x_i$ augmented with a weighted consensus dynamics. Specifically, the dynamics of the network is given by:
\begin{equation} \label{networkLTI}
\begin{aligned}
	\dot{\vec{x}} &= \left\{ \sum_{l=1}^M \left[ -\left( \vec{e}_{ij}\right) \left( \vec{e}_{ij}\right)^T w_{ij} \right] - I \right\} \vec{x}\,, \ \	l=\sigma(i,j)\,. 
\end{aligned}
\end{equation}
The dynamics of the network can then be written as 
\begin{equation}
	\dot{\vec{x}} = A(\vec{w}) \vec{x} = \left( A_0+\sum_{l=1}^M A_l w_{ij} \right) \vec{x} \,, \label{wParameter}
\end{equation}
%where,
%\begin{equation}
%A(\vec{w}) = \sum_{l=1}^M \left[ -\vec{e}_{ij} \vec{e}_{ij}^T w_{ij} \right] - I \,.
%\end{equation}
where,
\begin{equation} \label{A0Als}
%\begin{aligned}
	A_0 = - I, \ \  A_l = -\vec{e}_{ij} \vec{e}_{ij}^T\,, \ \ l = \sigma(i,j)\,.
%\end{aligned}
\end{equation}

\begin{lemma} \label{A_nd}
The matrix $A(\vec{w})$ is negative definite for all positive weights.
\end{lemma}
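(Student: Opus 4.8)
The plan is to recognize the parametrized matrix $A(\vec{w})$ as the negative of a Laplacian shifted by the identity, and then simply invoke the positive semidefiniteness of the weighted Laplacian recorded in \cref{sec:prelim}. Concretely, from \eqref{A0Als} we have
\[
A(\vec{w}) = A_0 + \sum_{l=1}^M A_l\, w_{ij} = -I - \sum_{l=1}^M w_{ij}\, \vec{e}_{ij}\vec{e}_{ij}^T\,, \qquad l=\sigma(i,j)\,.
\]
The first step is to identify the rank-one sum with the weighted graph Laplacian $L(\vec{w})$. This follows from the incidence-matrix factorization $L(\vec{w}) = E(\mathcal{G})\,\mathrm{diag}(\vec{w})\,E(\mathcal{G})^T$: the $l$-th column of $E(\mathcal{G})$ is exactly $\vec{e}_{ij}$ with $l=\sigma(i,j)$, so expanding the product column by column yields $\sum_l w_l\, \vec{e}_{ij}\vec{e}_{ij}^T$. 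Hence $A(\vec{w}) = -I - L(\vec{w})$.

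The second step is the conclusion. Since all weights are positive, $L(\vec{w})$ is a sum of positive-semidefinite rank-one matrices, so $L(\vec{w}) \succeq 0$ (the property stated in the Background). Therefore, for any nonzero $\vec{z} \in \mathbb{R}^N$,
\[
\vec{z}^T A(\vec{w}) \vec{z} = -\|\vec{z}\|^2 - \vec{z}^T L(\vec{w}) \vec{z} \le -\|\vec{z}\|^2 < 0\,,
\]
so $A(\vec{w}) \prec 0$. Equivalently, if $0 = \lambda_1 \le \cdots \le \lambda_N$ are the eigenvalues of $L(\vec{w})$, then the eigenvalues of $A(\vec{w})$ are $-1-\lambda_i \le -1 < 0$.

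I do not expect a genuine obstacle here; the only point requiring care is the bookkeeping in the first step, i.e.\ checking that the index map $\sigma$ and the sign convention $\vec{e}_{ij} = \vec{e}_i - \vec{e}_j$ are used consistently so that the rank-one sum is genuinely $L(\vec{w})$ and not something off by a sign or a spurious diagonal term. Once that identification is secured, the $-I$ shift moves the whole spectrum strictly into the open left half-plane, which is exactly the negative-definiteness claimed (and, as a byproduct, certifies that the network dynamics \eqref{wParameter} is asymptotically stable for every positive weighting).
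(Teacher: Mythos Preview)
Your argument is correct: identifying $A(\vec{w}) = -I - L(\vec{w})$ via the incidence-matrix factorization and then invoking $L(\vec{w})\succeq 0$ from \cref{sec:prelim} gives $\vec{z}^T A(\vec{w})\vec{z} \le -\|\vec{z}\|^2 < 0$ for every nonzero $\vec{z}$, which is exactly the claim. The paper itself states \cref{A_nd} without proof, so there is nothing further to compare; your write-up is precisely the justification the paper's setup is pointing toward.
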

%\begin{proof}
%For a vector $\vec{a} \neq 0$, we have
%\begin{equation}
%\vec{a}^T A(\vec{w}) \vec{a} = -\left( \sum_{l=1}^M \left( a_i-a_j\right)^2 w_{ij} + \|\vec{a} \|^2 \right)\,.
%\end{equation}
%Since $w_{ij} > 0$, we have $\vec{a}^T A(\vec{w}) \vec{a} \leq 0$, and, $\vec{a}^T A(\vec{w}) \vec{a}=0$ if and only if $\vec{a}=\vec{0}$. Thus $A(\vec{w})\prec 0$.
%\end{proof}

\subsection{Optimization Problem}

Now, assume an intruder in the network aims to retrieve information by connecting to an agent in the network, say agent $k$. Then the data being directly exposed to the intruder is $y = x_k$. The objective here is to minimize the information leaking from this node through an adaptation mechanism on the network. Thus, we proceed to minimize a particular metric of observability of the network with respect to the measurement $y = x_k$. This optimization problem for privacy can be written as
\begin{equation} \label{MaxPrivacy}
\begin{aligned}
& \underset{\vec{w}}{\text{minimize}}
& & \trace \left(W_{O}\right) \\
& \text{subject to}
& & \dot{\vec{x}} = A(\vec{w}) \vec{x} \,\\
& \text{}
& & y = x_k \,.
\end{aligned} 
\end{equation}

Now, let us discretize the time period $[0,t_f]$, such that $0=t_0<t_1<\cdots<t_f$. Thus, the trace of the empirical observability gramian is given by
\begin{equation}
\begin{aligned}
\trace\left(W_{O}\right) &= \frac{1}{4\epsilon^2} \int_{0}^{t_f} \sum\limits_{i=1}^N \left\| y^{+i}(\tau)-y^{-i}(\tau) \right\|^2 \mathrm{d}\tau \\
&=\sum_{t}  \int_{t}^{t+\Delta} \frac{1}{4\epsilon^2} \sum\limits_{i=1}^N \left\| x_k^{+i}(\tau)-x_k^{-i}(\tau) \right\|^2 \mathrm{d}\tau \\
&=\sum_{t}  \int_{t}^{t+\Delta} \sum\limits_{i=1}^N \left( \vec{e}_i^T e^{A\tau} \vec{e}_k \right)^2 \mathrm{d}\tau\\
&=\sum_{t}  \int_{t}^{t+\Delta} \trace \left( \vec{e}_k \vec{e}_k^T e^{2A\tau}  \right) \mathrm{d}\tau\\
&=\sum_{t}  \int_{t}^{t+\Delta} \left[e^{2A\tau}\right]_{k,k} \mathrm{d}\tau \,.
%&=\sum_{t} 2 \vec{e}_k^T A e^{2At} \left( e^{2A\Delta}-I \right) \vec{e}_k \,.
\end{aligned} \label{TrW_LTV}
\end{equation}

\begin{lemma} \label{partialCVX}
The function $\phi(\vec{w},\tau)=\left[e^{2A(\vec{w})\tau}\right]_{k,k}$ is convex with respect to $\vec{w}>0$, for all $\tau>0$.
\end{lemma}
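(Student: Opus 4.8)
The plan is to first expose the structure of $A(\vec{w})$, reduce the claim to a convexity statement for the matrix exponential, and then attack that statement with Duhamel's formula. From $A_{0}=-I$ and $A_{l}=-\vec{e}_{ij}\vec{e}_{ij}^{T}$ we have $A(\vec{w})=-I-L(\vec{w})$, where $L(\vec{w})=\sum_{l=1}^{M} w_{l}\,\vec{e}_{ij}\vec{e}_{ij}^{T}$ ($l=\sigma(i,j)$) is the weighted Laplacian: a \emph{symmetric} matrix that is \emph{affine} in $\vec{w}$ and satisfies $L(\vec{w})\succeq 0$ for $\vec{w}>0$. Hence $2A(\vec{w})\tau$ is symmetric and affine in $\vec{w}$, and
\[
 \phi(\vec{w},\tau)=\left[e^{2A(\vec{w})\tau}\right]_{k,k}= e^{-2\tau}\,\vec{e}_{k}^{T}e^{-2\tau L(\vec{w})}\vec{e}_{k}.
\]
Since $e^{-2\tau}$ is a positive constant, it suffices to show $\vec{w}\mapsto \vec{e}_{k}^{T}e^{-2\tau L(\vec{w})}\vec{e}_{k}$ is convex on $\{\vec{w}>0\}$; and as $\vec{w}\mapsto L(\vec{w})$ is affine, it is enough to check convexity along an arbitrary segment $\vec{w}_{0}+t\vec{h}$. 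Writing $L_{0}=L(\vec{w}_{0})$, $H=\sum_{l} h_{l}\vec{e}_{ij}\vec{e}_{ij}^{T}$ for the (symmetric, signed-Laplacian) direction matrix, and $s=2\tau$, Duhamel's formula gives the second directional derivative
\[
 2s^{2}\int_{\Delta}\vec{e}_{k}^{T}e^{-s\sigma_{1}L_{0}}H\,e^{-s\sigma_{2}L_{0}}H\,e^{-s\sigma_{3}L_{0}}\vec{e}_{k}\;d\sigma,\qquad \Delta=\{\sigma_{i}\ge 0,\ \sigma_{1}+\sigma_{2}+\sigma_{3}=1\}.
\]

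Diagonalizing $L_{0}=\sum_{i}\lambda_{i}\vec{q}_{i}\vec{q}_{i}^{T}$ (with $\lambda_{i}\ge 0$ and $\vec{q}_{0}=\vec 1/\sqrt{N}$) and using the Hermite--Genocchi identity $\int_{\Delta}e^{\sigma_{1}x_{1}+\sigma_{2}x_{2}+\sigma_{3}x_{3}}\,d\sigma = g[x_{1},x_{2},x_{3}]$ for $g=\exp$ (the second divided difference), the derivative equals
\[
 2s^{2}\sum_{i,j,l} g\!\left[-s\lambda_{i},-s\lambda_{j},-s\lambda_{l}\right]\,a_{i}H_{ij}H_{jl}a_{l},\qquad a_{i}=\vec{q}_{i}^{T}\vec{e}_{k},\ \ H_{ij}=\vec{q}_{i}^{T}H\vec{q}_{j},
\]
so the whole task is to show this number is $\ge 0$ for every $\vec{w}_{0}>0$, every admissible $\vec{h}$, and every $\tau>0$.

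The essential difficulty is precisely this nonnegativity: it is \emph{not} a sum of squares. Grouped by the middle index it reads $2s^{2}\sum_{j}\langle \vec{b}^{(j)},K^{(j)}\vec{b}^{(j)}\rangle$ with $\vec{b}^{(j)}_{i}=a_{i}H_{ij}$ and $K^{(j)}_{il}=g[-s\lambda_{i},-s\lambda_{j},-s\lambda_{l}]$, and the matrices $K^{(j)}$ are not positive semidefinite in general — positive semidefiniteness of all such divided-difference matrices is equivalent to operator convexity of $\exp$, which fails. The argument must therefore use the structure not yet exploited: each building block $\vec{e}_{ij}\vec{e}_{ij}^{T}$ is rank one and positive semidefinite (so for a single-edge direction the integrand factors as $(\vec{e}_{k}^{T}e^{-s\sigma_{1}L_{0}}\vec{e}_{ij})(\vec{e}_{ij}^{T}e^{-s\sigma_{2}L_{0}}\vec{e}_{ij})(\vec{e}_{ij}^{T}e^{-s\sigma_{3}L_{0}}\vec{e}_{k})$, with the middle factor $\ge 0$); $e^{-uL_{0}}$ is an entrywise nonnegative, $\vec 1$-fixing symmetric stochastic kernel for $u\ge 0$; the admissible directions annihilate $\vec 1$; and $\mathcal{G}$ is connected. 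I expect the crux to be controlling the off-diagonal terms $\sigma_{1}\neq\sigma_{3}$ of the simplex integral, using the scalar facts $g[x,x,y]\ge 0$ and $e^{u}\ge 1+u$ together with the preceding positivity and monotonicity of the heat kernel, so as to rewrite the triple sum as a manifestly nonnegative quadratic form in the quantities $\vec{e}_{ij}^{T}e^{-uL_{0}}\vec{e}_{k}$; that reduction is the step on which I would spend most of the effort.
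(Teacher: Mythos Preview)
Your route diverges from the paper's. The paper never computes a second derivative; it works under the standing hypothesis that $A(\vec x)$ and $A(\vec y)$ commute for all $\vec x,\vec y$, and uses the first-order convexity criterion. Under commutativity every Lie-bracket term in the derivative of $e^{A(\vec x)}$ vanishes, so $\sum_i(\partial_{x_i}e^{A(\vec x)})(y_i-x_i)=e^{A(\vec x)}\bigl(A(\vec y)-A(\vec x)\bigr)$, and the required inequality $e^{A(\vec y)}-e^{A(\vec x)}\succeq e^{A(\vec x)}\bigl(A(\vec y)-A(\vec x)\bigr)$ reduces, after factoring out $e^{A(\vec x)}$, to $e^{Z}\succeq I+Z$ with $Z=A(\vec y)-A(\vec x)$ symmetric, which follows at once from the spectral decomposition. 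That is the entire argument.

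Your Duhamel/Hermite--Genocchi approach is the natural one when commutativity is \emph{not} assumed, and your diagnosis is correct: the triple divided-difference sum is not a sum of squares and need not be nonnegative for a general symmetric affine family, precisely because $\exp$ is not operator convex. But the proposal stops exactly at the hard step. You list structural ingredients (rank-one PSD edge blocks, entrywise positivity and stochasticity of $e^{-uL_0}$, $\vec 1$-annihilation, connectedness) and announce a reduction to a ``manifestly nonnegative quadratic form,'' yet no such reduction is carried out or even sketched beyond the single-edge case; for mixed edge directions the cross terms in the simplex integral can carry either sign, and nothing you have written controls them. As it stands this is a well-motivated outline with the decisive argument missing. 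It is worth recording that the commutativity hypothesis the paper relies on is \emph{not} satisfied by weighted Laplacians of a generic connected graph (it already fails on the path $P_3$), so the obstruction you identify is genuine rather than an artifact of your method; completing your line of argument would in fact close a gap that the paper's own proof leaves open.
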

\begin{proof}
In the Appendix, \cref{CVX_expMatrix}, it is proved that $\left[e^{A(\vec{w})}\right]_{k,k}$ is convex for all matrices $A(\vec{w})=A_0(\vec{w})+\sum_{l} A_l(\vec{w}) \vec{w}[l]$, for $A(\vec{w})$ defined in \eqref{wParameter}.
\end{proof}

%\color{red}
%\begin{lemma} \label{partialCVX}
%The function $\phi(\vec{w},\tau)=\left[e^{2A(\vec{w})\tau}\right]_{k,k}$ is convex with respect to $\vec{w}>0$, for all $\tau>0$.
%\end{lemma}
%\begin{proof}
%We can verify convexity by considering an arbitrary line, given by $\vec{w}=\vec{a}+s\vec{b}$, where, $\vec{a},\vec{b} > \vec{0}$. We define $g(s) = \vec{e}_k^T e^{2A(s)\tau} \vec{e}_k$, and restrict $g$ to the interval of values of $s$, that $\vec{a}+s\vec{b} > 0$. Then,
%$$2A(s)\tau = 2\tau \left( A_0 + \sum_{l=1}^M A_l \vec{a}[l] + s \sum_{l=1}^M A_l \vec{b}[l] \right) \,.$$
%We know that for two matrices $X$ and $Y$ which commute, we have
%$$\frac{d^2}{ds^2} e^{X+sY} = e^{X+sY} Y^2 = Y e^{X+sY} Y =Y^2 e^{X+sY} \,.$$
%From \eqref{A0Als}, $A_i$ and $A_j$ commute for all $0\leq i,j \leq M$, then
%$$\frac{d^2}{ds^2} e^{2A(s)\tau} = 4 \tau^2 \left( \sum_{l=1}^M \vec{e}_{ij} \vec{e}_{ij}^T \vec{b}[l] \right)^2 e^{2A(s)\tau} \geq 0 \,.$$
%Thus,
%$$\frac{d^2 g(s)}{ds^2}  = \vec{e}_k^T \left(\frac{d^2}{ds^2} e^{2A(s)\tau}\right) \vec{e}_k \geq 0\,.$$
%Since $g''(s) \geq 0$, we conclude that the function $\phi(\vec{w},\tau)$ is convex with respect to $\vec{w}$.
%\end{proof}
%
%\color{black}
\subsection{Online Algorithm for Weight Selection}

We now consider the problem of online adaptation of the weight in the network to minimize
%At each time $t$, the cost function, which was shown that is convex, is given by:
\begin{equation}
\begin{aligned}
& \text{minimize}
& & f_t(\vec{w}) \\
& \text{subject to}
& & \vec{w}_{min} \leq \vec{w} \leq \vec{w}_{\max} \\
& \text{}
& & \vec{1}^T \vec{w} = 1 \,,
\end{aligned} \label{OnlineMaxObs}
\end{equation}
where 
\begin{equation} \label{costOnline}
	f_t(\vec{w}) = \int_{t}^{t+\Delta} \phi(\vec{w},\tau) \mathrm{d}\tau \,.
\end{equation}
% + \gamma \| \vec{w}\|^2

\begin{theorem}
The cost function \eqref{costOnline} is convex.
\end{theorem}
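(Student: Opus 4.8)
The plan is to deduce the convexity of $f_t$ directly from the pointwise convexity already established in \cref{partialCVX}, using the elementary fact that an integral (with respect to $\tau$ over a fixed interval) of a family of convex functions is again convex. First I would observe that the feasible set appearing in \eqref{OnlineMaxObs}, namely $\{\vec{w}: \vec{w}_{\min}\le \vec{w}\le \vec{w}_{\max},\ \vec{1}^T\vec{w}=1\}$, is the intersection of a box with an affine hyperplane and is therefore a convex subset of $\mathbb{R}_{>0}^M$, so that it is meaningful to ask whether $f_t$ is convex on it.

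Next I would fix two feasible weight vectors $\vec{w}_1,\vec{w}_2$ and a scalar $\lambda\in[0,1]$, and set $\vec{w}_\lambda=\lambda\vec{w}_1+(1-\lambda)\vec{w}_2$. By \cref{partialCVX}, for every $\tau>0$ one has
\[
\phi(\vec{w}_\lambda,\tau)\ \le\ \lambda\,\phi(\vec{w}_1,\tau)+(1-\lambda)\,\phi(\vec{w}_2,\tau),
\]
and at $\tau=0$ (relevant only if $0\in[t,t+\Delta]$) the inequality is trivial since $\phi(\vec{w},0)=[I]_{k,k}=1$ for all $\vec{w}$. Because $\tau\mapsto[e^{2A(\vec{w})\tau}]_{k,k}$ is continuous on the compact interval $[t,t+\Delta]$ (and, by \cref{A_nd}, $A(\vec{w})$ is negative definite, so the integrand is in fact bounded), the integral in \eqref{costOnline} is well defined. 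Integrating the displayed inequality over $\tau\in[t,t+\Delta]$ and using linearity and monotonicity of the integral then gives
\[
f_t(\vec{w}_\lambda)\ \le\ \lambda\,f_t(\vec{w}_1)+(1-\lambda)\,f_t(\vec{w}_2),
\]
which is precisely the convexity of $f_t$.

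There is essentially no hard step here: the nontrivial content has been absorbed into \cref{partialCVX} (equivalently, into the appendix lemma on convexity of $[e^{A(\vec{w})}]_{k,k}$). The only points that merit a line of justification are (i) convexity of the feasible region, which is immediate, and (ii) well-definedness of the integral so that ``integrate the inequality'' is legitimate---handled by continuity of the matrix exponential in $\tau$ on a compact interval, with negative definiteness of $A(\vec{w})$ from \cref{A_nd} available if a uniform bound is desired. A slightly slicker phrasing would be to note that $f_t$ is a nonnegative-weight average of the functions $\vec{w}\mapsto\phi(\vec{w},\tau)$, each convex by \cref{partialCVX}, and that such averages inherit convexity.
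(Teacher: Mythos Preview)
Your proposal is correct and follows essentially the same route as the paper's own proof: invoke \cref{partialCVX} for the pointwise convexity of $\phi(\vec{w},\tau)$ in $\vec{w}$, then use that integrating a family of convex functions over $\tau$ preserves convexity. The paper states this in a single line, whereas you spell out the feasibility-set convexity and integrability justifications, but the underlying argument is identical.
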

\begin{proof}
In \cref{partialCVX}, we have shown that $\phi(\vec{w},\tau)$ is convex. Since, integral of a convex function is convex, the result of the theorem follows.
\end{proof}
% and, $\gamma \| \vec{w}\|^2$ is convex, 

\begin{theorem}
The cost function \eqref{costOnline} has bounded gradients.
\end{theorem}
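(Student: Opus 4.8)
The plan is to differentiate $\phi(\vec{w},\tau)=[e^{2A(\vec{w})\tau}]_{k,k}$ in closed form, dominate the result by an integrable function of $\tau$ that is uniform in $\vec{w}$ over the feasible set, and then integrate over the fixed window $[t,t+\Delta]$. First I would observe that $\vec{w}\mapsto\phi(\vec{w},\tau)$ is smooth, so $f_t$ is differentiable with $\nabla_{\vec{w}}f_t(\vec{w})=\int_t^{t+\Delta}\nabla_{\vec{w}}\phi(\vec{w},\tau)\,\mathrm{d}\tau$; it therefore suffices to bound each $|\partial\phi/\partial w_l|$ uniformly over the feasible polytope $\{\vec{w}_{\min}\le\vec{w}\le\vec{w}_{\max},\ \vec{1}^T\vec{w}=1\}$, which in particular forces every weight to be positive. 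For a single fixed $t$ one could instead appeal to continuity of $\nabla_{\vec{w}}f_t$ on a compact set, but the explicit estimate below has the advantage of producing a bound independent of $t$ and $\Delta$, which is what the online regret analysis actually needs.

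Next I would apply the integral (Duhamel) formula for the Fr\'echet derivative of the matrix exponential. Since $\partial A/\partial w_l=A_l=-\vec{e}_{ij}\vec{e}_{ij}^T$ by \eqref{A0Als}, this yields
$$\frac{\partial\phi}{\partial w_l}(\vec{w},\tau)=2\tau\int_0^1\vec{e}_k^T e^{2sA\tau}A_l e^{2(1-s)A\tau}\vec{e}_k\,\mathrm{d}s=-2\tau\int_0^1\bigl(\vec{e}_k^T e^{2sA\tau}\vec{e}_{ij}\bigr)\bigl(\vec{e}_{ij}^T e^{2(1-s)A\tau}\vec{e}_k\bigr)\,\mathrm{d}s.$$
The key estimate then comes from the spectral structure of $A$: since $A(\vec{w})=-I-\sum_l w_l\vec{e}_{ij}\vec{e}_{ij}^T$ is symmetric and, for positive weights, a difference of $I$ and a positive semidefinite matrix, we have $A(\vec{w})\preceq -I$, so every eigenvalue of $A(\vec{w})$ is at most $-1$ and $\|e^{\sigma A(\vec{w})}\|_2\le e^{-\sigma}$ for all $\sigma\ge 0$ --- a quantitative sharpening of \cref{A_nd}. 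Using submultiplicativity of the spectral norm together with $\|\vec{e}_k\|_2=1$ and $\|\vec{e}_{ij}\|_2=\sqrt{2}$, the two factors inside the $s$-integral are bounded in magnitude by $\sqrt{2}\,e^{-2s\tau}$ and $\sqrt{2}\,e^{-2(1-s)\tau}$ respectively, whence $|\partial\phi/\partial w_l|\le 4\tau e^{-2\tau}$ uniformly in $\vec{w}$. Integrating, $|\partial f_t/\partial w_l|\le\int_t^{t+\Delta}4\tau e^{-2\tau}\,\mathrm{d}\tau\le\int_0^\infty 4\tau e^{-2\tau}\,\mathrm{d}\tau=1$, and therefore $\|\nabla_{\vec{w}}f_t(\vec{w})\|_2\le\sqrt{M}$ for every round $t$ and every feasible $\vec{w}$.

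The only genuinely delicate step is the matrix-exponential derivative: because $A(\vec{w})$ and $A_l$ do not commute, one cannot collapse $\partial e^{2A\tau}/\partial w_l$ to $2\tau A_l e^{2A\tau}$, so retaining the full parameter integral in $s$ is precisely what keeps the estimate valid, and the bound $\|e^{\sigma A}\|_2\le e^{-\sigma}$ is exactly what controls both exponential factors simultaneously and renders the $\tau$-integral finite. Everything after that is an elementary integral evaluation. I would also note that, since the resulting constant does not depend on $t$, this bound feeds directly into the logarithmic-regret guarantee for the online Newton-type update referenced earlier.
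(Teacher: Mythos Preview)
Your argument is correct, and it takes a genuinely different route from the paper's own proof. The paper computes the derivative via the collapsed formula
\[
\frac{\partial \phi}{\partial w_{ij}}=\vec{e}_k^T\bigl(2\tau\,e^{2A(\vec{w})\tau}A_l\bigr)\vec{e}_k
= -2\tau\,\trace\!\left[(\vec{e}_k\vec{e}_k^T e^{A\tau})^T(\vec{e}_{ij}\vec{e}_{ij}^T e^{A\tau})\right],
\]
which relies on the commuting hypothesis stated in the Appendix, and then argues qualitatively that since $A$ is negative definite (\cref{A_nd}) the matrix $\tau e^{2A\tau}$ has finite entries for every $\tau>0$, so the integral of $\nabla\phi$ over the finite window $[t,t+\Delta]$ is bounded. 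Your approach instead keeps the full Duhamel integral, which is valid without any commutation assumption, and then exploits the sharper spectral fact $A(\vec{w})\preceq -I$ to obtain the pointwise estimate $\lvert\partial\phi/\partial w_l\rvert\le 4\tau e^{-2\tau}$, yielding the explicit uniform bound $\|\nabla f_t(\vec{w})\|_2\le\sqrt{M}$ for every round $t$. The payoff of your route is twofold: it sidesteps the commutativity question you rightly flag, and it produces a single constant $G=\sqrt{M}$ independent of $t$ and $\Delta$, which is exactly the form needed downstream in the Online Newton Step regret bound. The paper's version is shorter and leads to the closed-form gradient used in Algorithm~\ref{ONS_Alg}, but its boundedness conclusion is stated per window rather than uniformly and leans on the commuting assumption.
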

\begin{proof}
First let us calculate the derivative of $\phi(\vec{w},\tau)$ with respect to $w_{ij}$,
\begin{equation} \label{gradientwij}
\begin{aligned}
&\frac{\partial \phi(\vec{w},\tau)}{\partial w_{ij}} = \vec{e}_k^T \left( 2\tau e^{2A(\vec{w})\tau} A_l \right) \vec{e}_k \\
&= -\vec{e}_k^T \left( 2\tau e^{2A(\vec{w})\tau} \vec{e}_{ij} \vec{e}_{ij}^T \right) \vec{e}_k \\
&= -2\tau \trace \left[\left( \vec{e}_k \vec{e}_k^T e^{A(\vec{w})\tau} \right)^T \left(\vec{e}_{ij} \vec{e}_{ij}^T e^{A(\vec{w})\tau} \right) \right]\\
& = -2\tau \sum_{m,n} \left[ \left( \vec{e}_k \vec{e}_k^T e^{A(\vec{w})\tau} \right) \circ \left( \vec{e}_{ij} \vec{e}_{ij}^T e^{A(\vec{w})\tau} \right) \right]_{m,n} \,.
%& = -2 \left(\left[\sqrt{\tau} e^{A(\vec{w})\tau}\right]_{k,i}-\left[\sqrt{\tau} e^{A(\vec{w})\tau}\right]_{k,j}\right)^2 \,.
\end{aligned}
\end{equation}
Thus,
\begin{equation}
\nabla \phi(\vec{w},\tau) = -2\tau \text{ diag} \left( E^T e^{2A\tau} \vec{e}_k \vec{e}_k^T  E \right) \,,
\end{equation}
and,
\begin{equation}
\vec{g}_t(\vec{w})=\nabla f_t(\vec{w}) = \int_{t}^{t+\Delta} \nabla \phi(\vec{w},\tau) \mathrm{d}\tau \,.
\end{equation} %  + 2 \gamma \vec{w}
Since $A$ is a negative definite matrix for all $\vec{w}$ (\cref{A_nd}), $\lambda_i\{A\} < 0$; thus the matrix $\tau e^{2A\tau}$ has finite entries for all $\tau>0$. Therefore, $\nabla \phi(\vec{w},\tau)$ is bounded, and the integral over a finite period of time, $\tau \in [t,t+\Delta]$ is bounded. Therefore, there always exists an upper-bound, $G$, for the gradient such that
\begin{equation}
\begin{aligned}
& \underset{\vec{w} \in \mathcal{P}, t \in [T]}{\text{sup}}
& & \| \nabla f_t(\vec{w}) \|_2 \leq G \,.
\end{aligned}
\end{equation}
\end{proof}

\subsection{Regret Minimization}
Regret is the difference between the cost of the sequence of actions and the performance of the best single action, $\vec{w}^*$, taken at every time step, considering $f_t(\vec{w})$ is known for all time a priori. The regret of an action sequence $\{\vec{w}_t\}$ is 
\begin{equation}
\mathcal{R}_T = \sum_{t=1}^T \left( f_t(\vec{w}_t) - f_t(\vec{w}^*) \right)\,.
\end{equation}

Here, we are interested in applying the celebrated $\log(T)$ bound online algorithm \cite{hazan2007logarithmic}. The \emph{Online Newton Step} (ONS) algorithm is given in Algorithm \ref{ONS_Alg}. This algorithm is straightforward to implement, and the running time is $O(M)$ per iteration given the gradient. This algorithm is based on the well known Newton-Raphson method for offline optimization uses second-order information of the cost functions. 
To implement this algorithm, the convex set $\mathcal{P}$ should be bounded, such that 
\begin{equation}
D = \underset{\vec{x}, \vec{y} \in \mathcal{P},}{\text{max}} \| \vec{x}-\vec{y} \|_2 < \infty \,;
\end{equation}
$D$ is called the diameter of the underlying convex set $\mathcal{P}$. 
%The algorithm given in Algorithm \ref{ONS_Alg} is an ONS algorithm which gives weights that guarantee sub-linear regret.
\begin{algorithm}[!h]
Inputs: convex set $\mathcal{P} \subset \mathbb{R}^M$, initial $\vec{w}_1 \in \mathcal{P}$\;
Set $\displaystyle \beta := \frac{1}{8GD}$ and $\displaystyle \epsilon = \frac{1}{\beta^2 D^2}$\;
In iteration $s=1$: use point $\vec{w}_1 \in \mathcal{P}$\;
 \Repeat{Convergence}{
 	$\displaystyle A_s = A_0+\sum_{l=1}^M A_l \vec{w}_s[l]$\;
% 	\ForEach{$\{i,j\} \in \mathcal{E}$}{
% 		$l := \sigma(i,j)$\;
	$\displaystyle \nabla \phi(\vec{w}_s,\tau) = -2\tau \text{ diag} \left( E^T e^{2A\tau} \vec{e}_k \vec{e}_k^T  E \right)$\;
% 		$\nabla \phi_{w_{ij}} := -2\tau \left(\left[e^{A_s\tau}\right]_{k,i}-\left[e^{A_s\tau}\right]_{k,j}\right)^2$\; 		
		$\displaystyle \vec{g}_{s} := \int_{s}^{s+\Delta} \nabla \phi \mathrm{d}\tau $\; %+ 2 \gamma \vec{w}_s[l]
%		$\displaystyle \mathbb{A}_s : = \sum_{i=1}^s \vec{g}_{i}[l] \vec{g}_{i}[l]^T + \epsilon I_M$\;
		$\displaystyle \mathbb{A}_s : = \sum_{i=1}^s \vec{g}_{i} \vec{g}_{i}^T + \epsilon I_M$\;
%		$\vec{w}_{s+1}[l] = \Pi_{\mathcal{P}} \left( \vec{w}_s[l]-\frac{1}{\beta} \mathbb{A}_s^{-1} \vec{g}_{s}[l] \right)$\;
		$\displaystyle \vec{w}_{s+1} = \Pi_{\mathcal{P}} \left( \vec{w}_s - \frac{1}{\beta} \mathbb{A}_s^{-1} \vec{g}_{s} \right)$\;
		$s := s+1$\;
		Here, $\displaystyle \Pi_{\mathcal{P}}$ denotes the \emph{projection} in the norm induced by $\displaystyle \mathbb{A}_s$, $\Pi_{\mathcal{P}}^{\mathbb{A}_s}(\vec{y}) = arg\min_{\vec{x} \in \mathcal{P}} (\vec{y}-\vec{x})^T \mathbb{A}_s (\vec{y}-\vec{x})$\;
\vspace{3mm}
 }
 \vspace{3mm}
 \caption{Online Newton Step Algorithm} \label{ONS_Alg}
\end{algorithm}

Using the Online Newton Step algorithm which is a second-order method is not essential here. We can easily apply a first-order online learning algorithm, e.g.\,, Online Gradient Descent \cite{hazan2007logarithmic}, and attain regret proportional to the logarithm of the number of iterations. A benefit of using Online Newton Step algorithm is its faster convergence compared to the first order algorithms. The main advantage of a first order algorithm compared to second order algorithms is its ease of generalization to the distributed scenarios. 
%A distributed gradient descent algorithm is developed in \cite{chapman2013distributed}.

%%%%%%%%%%%%%%%%%%%%%%%%%%%%%%%%%%%%%%%
\section{Simulation}
%%%%%%%%%%%%%%%%%%%%%%%%%%%%%%%%%%%%%%%

Our privacy guarantee online algorithm is tested on a random 9 node graph. We assume there exists a single foreign agent attacking the network. The location of the foreign agent is fixed from $t = 0$ to $t = 25$, where the intruder changes its location to another node. The best graph topology for this foreign agent evolution is calculated over 50 iterations. The resultant regret is depicted in \cref{25regret} emphasizing the performance agreement with the $O(\log(T))$ bound found by Hazan \emph{et al.} \cite{hazan2007logarithmic}.
\begin{figure}
   \begin{center}
  	 {\includegraphics[width=.5\textwidth]{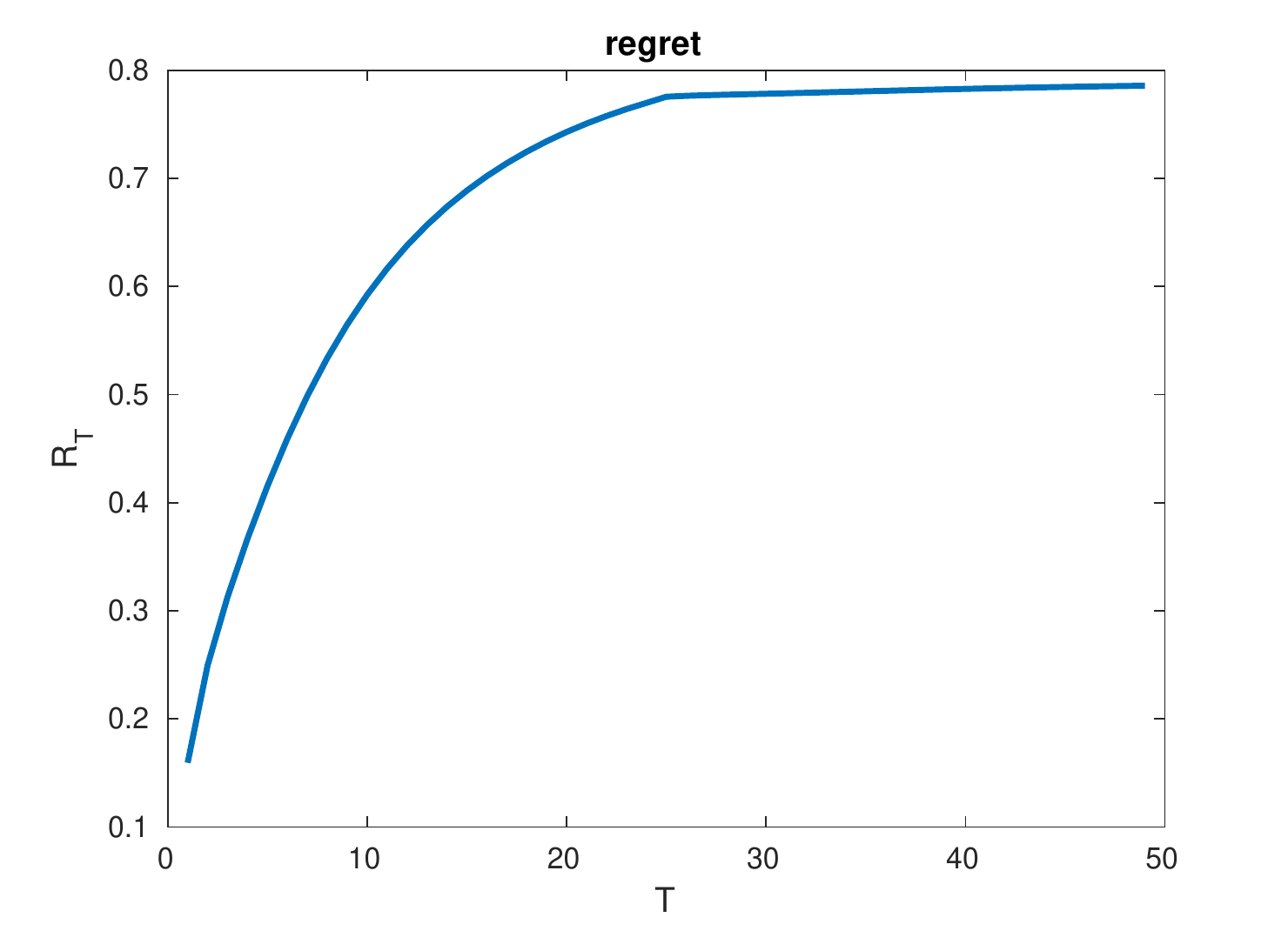}}
   \end{center}
   \caption{Regret over time of Algorithm \ref{ONS_Alg}.}
   \label{25regret}
\end{figure}

The online Newton step algorithm, given in Algorithm \ref{ONS_Alg} is also applied to a 15 node graph with multiple foreign nodes that their locations are randomly changing over time.
%a random 9 node graph. The location of a single foreign agent is detected by the agents marked with large circles. 
The location of foreign agents are detected by the agents marked with large circles. The edge weights are initialized randomly and constrained such that $\vec{1}^T \vec{w}=1$, $\vec{w}_{min} = 0.01$ and $\vec{w}_{\max} = 0.99$. The resultant graph after each online Newton step is depicted in \cref{multi10change}. The first three figures (top of \cref{multi10change}) depict the algorithm response to static foreign agents locations. The notable characteristic, over these figures, is the increasing of edge weights on those edges close to the foreign agents. The intuition behind this behavior is that the state of the node which is connected to the malicious node need to synchronize fast to make it indistinguishable to the foreign agent. Note that the state of a node when it reaches to its equilibrium becomes less distinguishable, and the measure of observability decreases. At time $t=10$, the location of the red circles (two of foreign agents) change, and the three plots on the bottom depict the algorithm's response to this change in the foreign agent location. Note that it is assumed here that the networked system is stable; therefore, the states of all nodes in the network synchronizes as time passes. Thus, adding an intruder after the network becomes synchronous, does not change the weights significantly. This comes from the fact that the state of a system becomes indistinguishable when it approaches to its equilibrium point.
%\begin{figure*}
%   \begin{center}
%  	 {\includegraphics[width=0.8\textwidth]{25change.pdf}}
%   \end{center}
%%   \vspace{-8mm}
%   \caption{Re-weighting of a network with a variable foreign node, denoted with large red circles.}
%   \label{25change}
%\end{figure*}
\begin{figure*}
   \begin{center}
  	 {\includegraphics[width=1.0\textwidth]{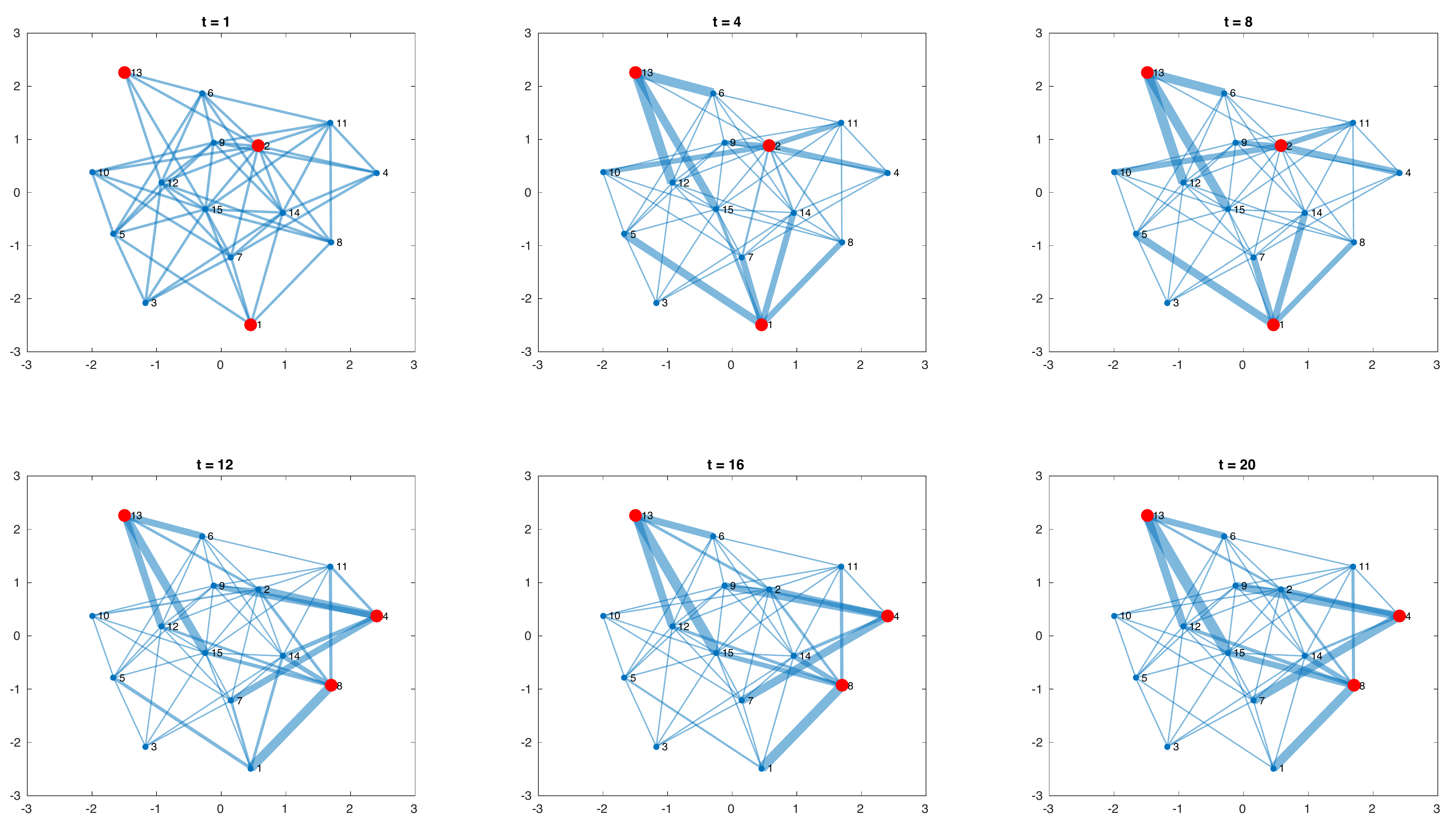}}
   \end{center}
%   \vspace{-8mm}
   \caption{Re-weighting of a network with multiple foreign nodes, denoted with large red circles.}
   \label{multi10change}
\end{figure*}

%The next example is a 15 node graph. There are multiple foreign nodes, and their locations are randomly changing over time. The resultant graph after each online Newton step is depicted in \cref{multi10change}. The first 3 plots (top of \cref{multi10change}) depict the algorithm response to a static foreign agent location.

\section{CONCLUSIONS}
\label{sec:conclusion}

The main goal in this paper was to design a proper selection of the weights in the dynamics of a networked system induced by the communication graph, such that the privacy of the network is guaranteed, where the privacy guarantee makes each agent cannot retrieve the initial states of other agents. It was assumed that we are uncertain about the intention of the foreign agent who attacks the network. The privacy of the network was posed as an online optimization problem on the gramian of a weighted network. An online learning algorithm was used to find an optimal set of weights that guarantee sub-linear regret. The use of empirical observability Gramian was discussed in the context of privacy for a networked control systems. For this approach it may be worth noticing that the empirical observability Gramian can be used to evaluate the observability of nonlinear systems, however, we considered a parametrized linear system to guarantee a logarithmic regret bound.

%\addtolength{\textheight}{-12cm}   % This command serves to balance the column lengths
                                  % on the last page of the document manually. It shortens
                                  % the textheight of the last page by a suitable amount.
                                  % This command does not take effect until the next page
                                  % so it should come on the page before the last. Make
                                  % sure that you do not shorten the textheight too much.

%%%%%%%%%%%%%%%%%%%%%%%%%%%%%%%%%%%%%%%%%%%%%%%%%%%%%%%%%%%%%%%%%%%%%%%%%%%%%%%%

%%%%%%%%%%%%%%%%%%%%%%%%%%%%%%%%%%%%%%%%%%%%%%%%%%%%%%%%%%%%%%%%%%%%%%%%%%%%%%%%
\bibliography{citations}
\bibliographystyle{IEEEtran}

%%%%%%%%%%%%%%%%%%%%%%%%%%%%%%%%%%%%%%%%%%%%%%%%%%%%%%%%%%%%%%%%%%%%%%%%%%%%%%%%
%\onecolumn
\section{APPENDIX}

Consider the function $A(\vec{x}):\mathbb{R}^M \rightarrow \mathbb{R}^{N \times N}$ defined as $A(\vec{x}) = A_0(\vec{x})+\sum_l A_l \vec{x}[l]$, such that all matrices $A_i(\vec{x})$ are symmetric, and $A(\vec{x})$ is invertible for all $\vec{x} \in \mathbb{R}^M$. It is also assumed that for all $\vec{x}, \vec{y} \in \mathbb{R}^M$, $A(\vec{x})$ and $A(\vec{y})$ commute. 
%$$ A(\vec{x})=\begin{bmatrix} -\sum_{j\neq 1}w_{1j}-\alpha &w_{12} & w_{13} & \cdots & w_{1N} \\ w_{12} & -\sum_{j\neq 2}w_{2j}-\alpha &w_{23} & \cdots & w_{2N} \\ \vdots & & \ddots & & \vdots \\ w_{1N} & \cdots &  & & -\sum_{j\neq N}w_{Nj}-\alpha\end{bmatrix} \,.$$
%
Define the matrix-valued function $F:\mathbb{R}^M \rightarrow \mathbb{R}^{N \times N}$ as $F(\vec{x}) = e^{A(\vec{x})}$. Notice that $F$ is positive definite, because it is the exponential of a symmetric matrix. 
%Let $f_k(\vec{x})$ be the $(k,k)$-component of $F$, $f_k(\vec{x})=\vec{e}_k^T F(\vec{x}) \vec{e}_k$.

\begin{lemma} \label{Z_inequality}
$e^Z \succeq I+Z$ for all symmetric matrices $Z$.
\end{lemma}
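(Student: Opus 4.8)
The plan is to reduce this matrix inequality to the elementary scalar inequality $e^{x} \ge 1 + x$, valid for every $x \in \mathbb{R}$, which itself follows from convexity of the exponential (the affine function $1+x$ is exactly the tangent line to $e^{x}$ at $x=0$). Since $Z$ is symmetric, the spectral theorem will let me transfer this scalar fact to the matrix level, one eigenvalue at a time.

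Concretely, first I would write the spectral decomposition $Z = Q \Lambda Q^{T}$, where $Q$ is orthogonal and $\Lambda = \mathrm{diag}(\lambda_{1},\dots,\lambda_{N})$ collects the (real) eigenvalues of $Z$. Next, using the power-series definition of the matrix exponential together with $Z^{k} = Q \Lambda^{k} Q^{T}$, I would note that $e^{Z} = Q e^{\Lambda} Q^{T}$ with $e^{\Lambda} = \mathrm{diag}(e^{\lambda_{1}},\dots,e^{\lambda_{N}})$, while trivially $I = Q I Q^{T}$ and $Z = Q \Lambda Q^{T}$. Hence
\[
e^{Z} - I - Z = Q\bigl(e^{\Lambda} - I - \Lambda\bigr) Q^{T},
\]
so it suffices to show that the diagonal matrix $e^{\Lambda} - I - \Lambda$ is positive semidefinite, because congruence by the orthogonal matrix $Q$ preserves the positive semidefinite cone: for any $\vec{v}$, $\vec{v}^{T} Q M Q^{T} \vec{v} = (Q^{T}\vec{v})^{T} M (Q^{T}\vec{v}) \ge 0$ whenever $M \succeq 0$.

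Finally, the diagonal matrix $e^{\Lambda} - I - \Lambda$ has $i$th diagonal entry $e^{\lambda_{i}} - 1 - \lambda_{i}$, which is nonnegative by the scalar inequality applied at $x = \lambda_{i}$. Thus all of its eigenvalues are nonnegative, so $e^{\Lambda} - I - \Lambda \succeq 0$, and the display above gives $e^{Z} \succeq I + Z$.

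There is no genuine obstacle in this argument; the only delicate point is recognizing where symmetry is used. Symmetry is what guarantees that $Z$ has real eigenvalues and an orthogonal eigenbasis, which is precisely what allows the scalar bound to be invoked spectrally. I would emphasize that the naive series comparison $e^{Z} - I - Z = \sum_{k \ge 2} Z^{k}/k!$ does \emph{not} obviously settle the claim, since odd powers of a symmetric matrix need not be positive semidefinite; the spectral reduction sidesteps this entirely.
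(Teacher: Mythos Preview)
Your proof is correct and follows essentially the same route as the paper: diagonalize $Z$ via the spectral theorem, reduce $e^{Z}\succeq I+Z$ to the diagonal inequality $e^{\Lambda}\succeq I+\Lambda$, and then invoke the scalar bound $e^{\lambda}\geq 1+\lambda$ on each eigenvalue. Your write-up is in fact slightly more careful than the paper's, since you explicitly justify why orthogonal congruence preserves positive semidefiniteness and point out why the naive power-series comparison is insufficient.
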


\begin{proof}
Since $Z$ is a symmetric matrix, it can be decomposed as $Z=U\Lambda U^T$, where $U$ is an orthogonal eigenvectors matrix, and $\Lambda$ is a diagonal matrix whose entries are the eigenvalues of $Z$. Then, we have $e^Z =U e^\Lambda U^T$. 
It remains to prove that $e^\Lambda \succeq I+\Lambda$. Since, $e^\lambda \geq 1+\lambda, \forall \lambda \in \mathbb{R}$, thus, $e^\Lambda \succeq I+\Lambda$, which completes the proof of the lemma.
\end{proof}

%\begin{proof}
%Since $e^Z \succeq 0$ for all symmetric matrices $Z$, it is clear that $e^Z \succeq I+Z$ for all $Z \preceq -I$. It remains to prove that $e^Z \succeq I+Z$ for all $Z \succeq -I$. The Taylor Series of $e^Z$ is
%$$e^Z = I+Z +\frac{Z^2}{2!}+\frac{Z^3}{3!}+\cdots$$
%If $Z \succeq 0$ then all of the terms $\frac{Z^k}{k!} \succeq 0$ and hence
%$$e^Z = I+Z +\frac{Z^2}{2!}+\frac{Z^3}{3!}+\cdots \succeq I+Z\,.$$
%If $-I \preceq Z \preceq 0$ then $Z^{2k}+Z^{2k+1} \succeq 0$ for all positive integers k. In fact, 
%$$\frac{Z^{2k}}{2k!}+\frac{Z^{2k+1}}{\left(2k+1\right)!} \succeq 0\,.$$
%Thus $e^Z \succeq I+Z$ for all $-I \preceq Z \preceq 0$.
%
%\end{proof}

\begin{theorem} \label{CVX_expMatrix}
The function $f(\vec{x}) = \vec{a}^T F(\vec{x}) \vec{a}$, for all $\vec{a} \in \mathbb{R}^N$ is convex.
\end{theorem}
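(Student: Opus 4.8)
The plan is to establish convexity through the first-order characterization: since $\vec{x}\mapsto F(\vec{x})=e^{A(\vec{x})}$ is real-analytic, $f$ is differentiable, and it suffices to show $f(\vec{y})\ge f(\vec{x})+\nabla f(\vec{x})^{T}(\vec{y}-\vec{x})$ for every $\vec{x},\vec{y}\in\mathbb{R}^{M}$. The workhorse will be the matrix inequality of \cref{Z_inequality}, namely $e^{Z}\succeq I+Z$ for symmetric $Z$.

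Fix $\vec{x},\vec{y}$ and set $Z:=A(\vec{y})-A(\vec{x})=\sum_{l}A_l\,(\vec{y}[l]-\vec{x}[l])$, which is symmetric. Because $A(\vec{x})$ and $A(\vec{y})$ commute, the exponentials multiply, so $F(\vec{y})=e^{A(\vec{x})}e^{Z}$. For the gradient, note that each $A_l$ commutes with $A(\vec{x})$ (all members of the family share a common eigenbasis, cf. the last paragraph), hence $\partial_{\vec{x}[l]}e^{A(\vec{x})}=A_l e^{A(\vec{x})}$, and therefore $\nabla f(\vec{x})^{T}(\vec{y}-\vec{x})=\sum_l \vec{a}^{T}A_l e^{A(\vec{x})}\vec{a}\,(\vec{y}[l]-\vec{x}[l])=\vec{a}^{T}e^{A(\vec{x})}Z\,\vec{a}$.

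With these two identities the target inequality becomes $\vec{a}^{T}e^{A(\vec{x})}\bigl(e^{Z}-I-Z\bigr)\vec{a}\ge 0$. By \cref{Z_inequality} the symmetric matrix $e^{Z}-I-Z$ is positive semidefinite. Since $A(\vec{x})$ is symmetric and commutes with $Z$, the positive definite matrix $P:=e^{A(\vec{x})/2}$ also commutes with $Z$, so $e^{A(\vec{x})}\bigl(e^{Z}-I-Z\bigr)=P\bigl(e^{Z}-I-Z\bigr)P$, which is positive semidefinite by congruence. Hence the quadratic form above is nonnegative, the first-order condition holds for all $\vec{x},\vec{y}$, and $f$ is convex.

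The only delicate point is the legitimacy of the exponential manipulations — that $e^{A(\vec{y})}=e^{A(\vec{x})}e^{Z}$, that each $A_l$ commutes with $A(\vec{x})$, and that $P$ commutes with $Z$. All of these rest on the standing hypothesis that $A(\vec{x})$ and $A(\vec{x}')$ commute for all arguments; combined with symmetry, this forces $A_0,A_1,\dots,A_M$ to admit a common orthogonal eigenbasis, after which every such claim is transparent. An equivalent route, for anyone wishing to bypass \cref{Z_inequality}, is to diagonalize in that common basis and write $f(\vec{x})=\sum_i (U^{T}\vec{a})_i^{2}\,e^{\lambda_i(\vec{x})}$, a nonnegative combination of exponentials of affine functions of $\vec{x}$, each convex by composition; but the operator-inequality argument above keeps the bookkeeping shortest.
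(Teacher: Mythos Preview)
Your proof is correct and follows essentially the same route as the paper: reduce the first-order convexity condition to the matrix inequality $e^{A(\vec{x})}(e^{Z}-I-Z)\succeq 0$ with $Z=A(\vec{y})-A(\vec{x})$, exploit the standing commutativity hypothesis to simplify the derivative of $e^{A(\vec{x})}$, and finish with \cref{Z_inequality}. Your congruence argument via $P=e^{A(\vec{x})/2}$ is in fact a cleaner justification of the final step than the paper's, which simply ``rearranges'' by the invertibility of $e^{A(\vec{x})}$ without noting that multiplication by an invertible matrix need not preserve semidefiniteness unless done symmetrically.
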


\begin{proof}
Recall that the differentiable function $f$ is convex if and only if
$$f(\vec{y}) \geq f(\vec{x})+f'(\vec{x}) \cdot (\vec{y}-\vec{x}) \,,$$
for all $\vec{x}, \vec{y}$. Hence, we need to show that
\begin{equation} \label{needToProve}
e^{A(\vec{y})} - e^{A(\vec{x})} \succeq \sum_{i=1}^M \frac{\partial e^{A(\vec{x})}}{\partial x_i} (y_i - x_i) \,,
\end{equation}
for all $\vec{x}, \vec{y} \in \mathbb{R}$. The derivative of the exponential function $e^{A(\vec{x})}$ is given by
\begin{equation}
\begin{aligned} 
&\frac{\partial e^{A(\vec{x})}}{\partial x_i} = e^{A(\vec{x})} \frac{1-e^{-\text{ad}_{A(\vec{x})}}}{\text{ad}_{A(\vec{x})}} A_i \\
&= e^{A(\vec{x})} \sum_{k=0}^\infty \frac{(-1)^k}{(k+1)!} \left( \text{ad}_{A(\vec{x})} \right)^k A_i \\
&= e^{A(\vec{x})} \left( A_i + \frac{-1}{2!} [A,A_i] + \frac{1}{3!} [A,[A,A_i]] + \cdots \right) \,,
\end{aligned}
\end{equation}
then the term $\displaystyle \sum_{i=1}^M \frac{\partial e^{A(\vec{x})}}{\partial x_i} (y_i - x_i) $ can be written as:
\begin{equation}
\begin{aligned} 
& \sum_{i=1}^M \frac{\partial e^{A(\vec{x})}}{\partial x_i} (y_i - x_i) = \sum_{i=1}^M e^{A(\vec{x})} \left( A_i (y_i - x_i) \right. \\
& \left. + \frac{-1}{2!} [A,A_i] (y_i - x_i) + \frac{1}{3!} [A,[A,A_i]] (y_i - x_i) + \cdots \right)
\end{aligned}
\end{equation}
\begin{equation} \label{manipulation}
\begin{aligned} 
&= e^{A(\vec{x})} (A(\vec{y})-A(\vec{x}) ) + \frac{-1}{2!} e^{A(\vec{x})} \sum_{i=1}^M \left( A A_i y_i - A A_i x_i \right. \\
&\left. - A_i A y_i + A_i A x_i \right) + \cdots = e^{A(\vec{x})} (A(\vec{y})-A(\vec{x}) ) \\
&+ \frac{-1}{2!} e^{A(\vec{x})} \left[ A(\vec{x}) \left( A(\vec{y}) - A(\vec{x}) \right) - \left( A(\vec{y}) - A(\vec{x}) \right) A(\vec{x}) \right] \\
& + \cdots = e^{A(\vec{x})} (A(\vec{y})-A(\vec{x}) ) + 0 + 0 + \cdots \,.
\end{aligned}
\end{equation}
Now, if we substitute \eqref{manipulation} in \eqref{needToProve}, we need to prove
\begin{equation} 
e^{A(\vec{y})} - e^{A(\vec{x})} \succeq e^{A(\vec{x})} (A(\vec{y})-A(\vec{x}) ) \,.
\end{equation}
Since $e^{A(\vec{x})} $ is invertible, this inequality can be rearranged to give
\begin{equation} 
e^{A(\vec{y}) - A(\vec{x})} \succeq I + A(\vec{y})-A(\vec{x}) \,.
\end{equation}
Since all matrices $A_i$ are symmetric, $A(\vec{y})-A(\vec{x})$ is symmetric for all $\vec{x}, \vec{y}$. Thus, by using \cref{Z_inequality}, the proof of this theorem follows.

\end{proof}

\end{document}